\newtheorem{lemma}{Lemma}
\newtheorem{theorem}{Theorem}
\theoremstyle{definition}
\newtheorem{example}{Example}
\newcommand{\zs}{\mathbb{Z}}
\begin{document}
\title{Linear difference 
operators with sequence coefficients having infinite-dimentional solution spaces}
\author{Sergei Abramov\\
\normalsize Federal Research Center ``Computer Science\\ 
\normalsize and Control'' of  
\normalsize the Russian Academy of Sciences,\\ 
\normalsize Moscow, Russia\\
\normalsize sergeyabramov@mail.ru \\
\and
Gleb Pogudin\\
\normalsize LIX, CNRS, \'Ecole polytechnique, \\
\normalsize Institute Polytechnique de Paris,\\
\normalsize  Paris, France \\	
\normalsize gleb.pogudin@polytechnique.edu}

\date{}
\maketitle

\rightline{\em In memory of Marko Petkov\v sek}

\begin{center}

    \vspace{1mm}
    \today
\end{center}
\begin{abstract}

The notion of lacunary infinite numerical sequence is introduced. It is shown that for an arbitrary
linear difference operator $L$ with coefficients belonging to the set $R$ of infinite numerical
sequences, a criterion (i.e., a necessary and sufficient condition) for the infinite-dimensionality of its space $V_L$
of solutions belonging to $R$ is the presence of a lacunary sequence in $V_L$.

\bigskip

\noindent{\bf Key words: }{\small Lacunary sequence, linear difference operator with sequence coefficients,
solution space dimension}
\end{abstract}
\section{Introduction}
Finding sequences satisfying linear difference equations with constant coefficients (called C-finite sequences) is a classical and well-studied topic: the solution set forms a vector space and its dimension is equal to the order of the operator.
If one passes from linear to nonlinear equations, questions about the structure of the set of solutions in sequences become more complicated~\cite{nonlinear1} or even undecidable~\cite{nonlinear2}.
However, one can still define and study an interesting notion of the dimension of the solution set as was shown in~\cite{nonlinear3}.
In this paper, we take a different route by keeping the equations linear but substantially relaxing the restrictions on the coefficients: following~\cite{abp21}, we allow any sequences as coefficients.
In other words, we consider equations in an unknown sequence $\{x(n)\}_{n \in \mathbb{Z}}$:
\begin{equation}\label{eq:main_intro}
  a_r(n) x(n + r) + \ldots + a_1(n) x(n + 1) + a_0(n) x(n) = 0\quad \text{ for every }n\in \mathbb{Z},
\end{equation}
where $a_1, \ldots, a_n$ arbitrary sequences acting as coefficients.
One can see that the solutions of~\eqref{eq:main_intro} for a vector space.
The main question we study in this paper is \emph{under which conditions the solution space of~\eqref{eq:main_intro} has infinite dimension}.
In~\cite[Sec. 3]{abp21}, a specific sequence was exhibited such that, if it is a solution of the equation, then the solution space has infinite dimension.
We develop this approach to a complete criterion: our main result (Theorem~\ref{thm:main}) is that this happens if and only if~\eqref{eq:main_intro} has a solution of certain type which we call a ``lacunary sequences'' (by analogy with lacunry power series), that is, sequences containing arbitrary long finite zero intervals.

\section{Example of inifinite dimension}
The following example shows that indeed, if the coefficients of~\eqref{eq:main_intro} are arbitrary sequences, the dimension of the solution space can be infinite.

\begin{example}
\label{ex1}
Let $L=\Sigma _{k=0}^r c_k(n)\sigma^k$, where
\begin{equation}
\label{eq1}
(r+1) \nmid (n+k)\Rightarrow c_k(n)=0,
\end{equation}
and a sequence $a(n)$ be such that
\begin{equation}
\label{eq2}
(r+1) \mid n\Rightarrow a(n)=0,
\end{equation}
then by \eqref{eq1} we have
$c_k(n)a(n+k)=0$ for all such $n$, that $(r+1) \nmid (n+k)$.
In turn, \eqref{eq2} gives
$c_k(n)a(n+k)=0$ for all $n$ such that  $(r+1) \mid (n+k)$.
Thus
$$L(a(n))=\Sigma _{k=0}^rc_k(n)a(n+k)=0$$
for all  $n\in \zs$, i.e. $V_L$ contains all numeric sequences satisfying \eqref{eq2}.
As the values of $a(n)$ for $n$ such that  $(r+1) \nmid n$ can be chosen arbitrary, we conclude that
$\dim V_L =\infty$.
\end{example}

\section{Lacunary sequences}
We consider a linear difference equation in an unknown sequence
$\{x(n)\}_{n \in \mathbb{Z}}$:
\begin{equation}\label{eq:main}
a_r(n) x(n + r) + \ldots + a_1(n) x(n + 1) + a_0(n) x(n) = 0
\end{equation}
where $\{a_{0}(n)\}_{n \in \mathbb{Z}}, \ldots,\{a_{r}(n)\}_{n \in \mathbb{Z}}$ are arbitrary sequences.

We will call the number $r$ the \emph{order} of~\eqref{eq:main}.
For a sequence $\{a(n)\}_{n\in\mathbb{Z}}$, its \emph{support} is defined as
\[
\operatorname{supp}(\{a(n)\}) := \{i \in \mathbb{Z} \mid a(i) \neq 0\}.
\]
A sequence $\{a(n)\}_{n \in \mathbb{Z}}$ will be called \emph{lacunary} if the difference between the consequent elements of the support (that is, $i$ and $j$ in $\operatorname{supp}\{a(n)\}$ such that $i < j$ and there is no $k \in \operatorname{supp}\{a(n)\}$ with $i < k < j$) can be arbitrary large.

\begin{theorem}\label{thm:main}
The following statements are equivalent:
\begin{enumerate}
    \item\label{infdim} the dimension of the solution space of~\eqref{eq:main} is infinite;
    \item\label{lacunary} \eqref{eq:main} has a lacunary solution.
\end{enumerate}
\end{theorem}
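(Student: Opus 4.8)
The plan is to prove the two implications separately, with \ref{lacunary}$\Rightarrow$\ref{infdim} being routine and \ref{infdim}$\Rightarrow$\ref{lacunary} carrying the real difficulty. Throughout write $V$ for the solution space of~\eqref{eq:main}. The basic tool is a \emph{splitting lemma}: if $a\in V$ vanishes on the interval $[m-r,m+r-1]$, then the truncations $a'=a\cdot\mathbf{1}_{(-\infty,m-1]}$ and $a''=a-a'$ both lie in $V$. Checking this amounts to inspecting~\eqref{eq:main} at the indices $n$ with $m-r\le n\le m-1$ (the only ones affected by the truncation): there every entry $a'(n+k)$, resp.\ $a''(n+k)$, occurring in the equation is either $0$ or equals $a(n+k)$ with $n+k\in[m-r,m+r-1]$, hence is $0$ in any case. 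Given a lacunary solution $a$ — which is automatically nonzero with infinite support, hence unbounded on at least one side, say above — one finds infinitely many gaps of $\operatorname{supp}(a)$ of length exceeding $2r$, occurring arbitrarily far to the right; applying the splitting lemma at split points chosen inside an increasing sequence of these gaps peels $a$ into infinitely many nonzero solutions with pairwise disjoint supports, which are therefore linearly independent, so $\dim V=\infty$.

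For \ref{infdim}$\Rightarrow$\ref{lacunary}, I would first reduce to a one-sided statement. If $\dim V=\infty$ then restriction to any finite window has infinite-dimensional kernel, so for every $N$ there is a nonzero solution vanishing on $[-N,N]$; splitting such a solution at a point far from the ends of $[-N,N]$ yields, for every $M$, a nonzero solution supported in $[M,\infty)$ or one supported in $(-\infty,-M]$. One of the two alternatives must hold for all $M$ (by monotonicity and the pigeonhole principle); by the evident symmetry, assume the first. Writing $V^+_M$ for the space of solutions supported in $[M,\infty)$, a nonzero element of $V^+_M$ has a minimal point of support, and by invoking the alternative at larger and larger thresholds one stacks infinitely many such solutions with strictly increasing minimal support points; these are linearly independent, so in fact $\dim V^+_M=\infty$ for every $M$.

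The heart of the matter is then the following claim, which I will call the \emph{localization lemma}: \emph{if $\dim V^+_M=\infty$, then $V^+_M$ contains a nonzero solution with finite support.} Granting it, \ref{infdim}$\Rightarrow$\ref{lacunary} follows quickly: apply the lemma inside $V^+_0$ to obtain a finitely supported nonzero solution $c_1$; then apply it inside $V^+_{M}$ for $M$ just past $\operatorname{supp}(c_1)$ — still infinite-dimensional by the previous paragraph — to obtain $c_2$, and continue, producing nonzero solutions $c_1,c_2,\dots$ with pairwise disjoint finite supports marching off to $+\infty$. Passing to a subsequence $c_{k_1},c_{k_2},\dots$ whose supports are separated by gaps growing without bound and forming the locally finite sum $a=\sum_i c_{k_i}\in V$, one gets a solution whose support has arbitrarily long gaps, i.e.\ a lacunary solution.

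The main obstacle I foresee is proving the localization lemma. When the leading coefficient $a_r$ is nonzero at all sufficiently large arguments it is easy: for large $M'$ the map $V^+_M\to\CC^r$ sending $c$ to $(c(M'),\dots,c(M'+r-1))$ has infinite-dimensional kernel (the target being finite-dimensional), and every $c$ in the kernel vanishes on $[M',\infty)$ by forward propagation, hence is finitely supported. The delicate case is when $a_r$ vanishes infinitely often with arguments tending to $+\infty$: at each such point the order of the recurrence drops and a fresh degree of freedom appears, and one must rule out the possibility that every nonzero element of $V^+_M$ has infinite support. I would attack this by induction on $r$ — the case $a_r\equiv 0$ being a genuine order reduction — combined with a transfer-matrix analysis of how the $r$-dimensional ``state'' $(c(n),\dots,c(n+r-1))$ is transported across consecutive zeros of $a_r$, showing that only finitely many of the injected degrees of freedom can persist indefinitely, so that all but finitely many yield compactly supported solutions. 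That last step is where I expect essentially all of the work to concentrate.
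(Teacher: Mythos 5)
Your overall architecture is sound and in fact mirrors the paper's: the splitting lemma (truncating a solution across a long run of zeros), the reduction via finite-codimension arguments to a one-sided space $V^+_M$ of infinite dimension, the extraction of finitely supported solutions marching to infinity, and the summation of a subsequence with growing gaps are all correct and correspond to Lemma~\ref{lem:ray} and the proof of Theorem~\ref{thm:main} in the paper. The implication \ref{lacunary}$\Rightarrow$\ref{infdim} is complete (one small slip: from ``support unbounded above'' you cannot conclude that the long gaps occur arbitrarily far to the right --- they might all lie on the left while the support is unbounded above; what is true, and suffices, is that the infinitely many gaps of length $>2r$ must accumulate towards $+\infty$ or $-\infty$, and the argument is symmetric).

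The genuine gap is the ``localization lemma'', which you correctly identify as the heart of the matter (it is Lemma~\ref{lem:interval} of the paper) but do not prove: you settle the case where $a_r$ is eventually nonzero and then only sketch a program (induction on $r$, transfer matrices across the zeros of $a_r$, ``only finitely many injected degrees of freedom persist'') for the remaining case, conceding that essentially all the work is still to be done. That plan is both incomplete and pointed in an unpromising direction, since the coefficients are arbitrary sequences and the equation need not propagate in either direction, so conditioning on the behaviour of the leading coefficient is a detour. The missing idea is a soft dimension count that needs no case analysis at all: since $\dim V^+_M=\infty$, the projections of $V^+_M$ onto finite windows $[M,J]$ have unbounded dimension (if all had dimension $\le D$, a decreasing-intersection argument on the spaces of linear dependencies among any $D+1$ elements would bound $\dim V^+_M$ by $D$); choose $J$ with projection of dimension at least $r+2$, impose the $r+1$ conditions $x(J-r)=\cdots=x(J)=0$ to obtain a nonzero element of $V^+_M$ whose restriction to $[M,J-r-1]$ is nonzero and which vanishes on $[J-r,J]$; then your own splitting lemma (in the sharper form requiring only $r+1$ consecutive zeros, or your $2r$ version with dimension bound $2r+1$) truncates it to a nonzero solution supported in $[M,J]$. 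With this lemma in hand your remaining steps go through verbatim, so the proposal is repairable, but as written the central claim is unproven.
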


The proof of the theorem will be based upon the following lemmas.

\begin{lemma}\label{lem:ray}
Assume that the dimension of the solution space of~\eqref{eq:main} is infinite.
Then there is a ray $\mathcal{S}$ in $\mathbb{Z}$, that is, a set of the form $\mathbb{Z}_{\geqslant i_0} := \{i \in \mathbb{Z} \mid i \geqslant i_0\}$ or $\mathbb{Z}_{\leqslant i_0} := \{i \in \mathbb{Z} \mid i \leqslant i_0\}$, such that the subspace of the solutions of~\eqref{eq:main} with the support contained in $\mathcal{S}$ is also infinite-dimensional.
\end{lemma}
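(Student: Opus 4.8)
\emph{Proof plan for Lemma~\ref{lem:ray}.}
The plan is to isolate a purely combinatorial ``splitting'' principle for solutions and then feed it an infinite-dimensional subspace produced by a one-line codimension count.

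First I would establish the following elementary claim: if $\{x(n)\}$ is a solution of~\eqref{eq:main} that vanishes on $r$ consecutive integers $\{p, p+1, \dots, p+r-1\}$, then its truncation $x^+ := x\cdot \mathbf{1}_{\mathbb{Z}_{\geqslant p}}$ is again a solution of~\eqref{eq:main}, and hence so is $x^- := x - x^+ = x\cdot\mathbf{1}_{\mathbb{Z}_{\leqslant p-1}}$. The claim for $x^+$ is verified by evaluating $\sum_{k=0}^{r} a_k(n)x^+(n+k)$ in three ranges of $n$: for $n\geqslant p$ every argument $n,\dots,n+r$ lies in $\mathbb{Z}_{\geqslant p}$, so the sum coincides with the original (vanishing) one; for $n\leqslant p-r-1$ every argument lies in $\mathbb{Z}_{\leqslant p-1}$, so every term is $0$; and in the middle range $p-r\leqslant n\leqslant p-1$ each argument $n+k$ is either $<p$, where $x^+$ vanishes by definition, or lies in $\{p,\dots,p+r-1\}$, where $x$ itself vanishes by hypothesis. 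Note that this uses nothing about the coefficients $a_0,\dots,a_r$.

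Next I would apply the claim with $p=0$. Let $V$ denote the solution space of~\eqref{eq:main}, which is infinite-dimensional by assumption, and consider the evaluation map $\varphi\colon V\to\mathbb{C}^{r}$, $x\mapsto (x(0),x(1),\dots,x(r-1))$. Since $V/\ker\varphi$ embeds into $\mathbb{C}^r$, the kernel $V_0 := \ker\varphi$ has codimension at most $r$ in $V$, so $\dim V_0 = \infty$. Every $x\in V_0$ vanishes on $\{0,\dots,r-1\}$, so the claim yields $x = x^- + x^+$ with $x^-$ a solution supported in $\mathbb{Z}_{\leqslant -1}$ and $x^+$ a solution supported in $\mathbb{Z}_{\geqslant 0}$; moreover $x^+ = x - x^-$ still vanishes on $\{0,\dots,r-1\}$. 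Writing $V_{\leqslant -1}$ and $V_{\geqslant 0}$ for the subspaces of solutions whose support is contained in $\mathbb{Z}_{\leqslant -1}$, respectively $\mathbb{Z}_{\geqslant 0}$, this gives a decomposition $V_0 = V_{\leqslant -1} \oplus (V_0\cap V_{\geqslant 0})$, which is direct because the two supports are disjoint. As $\dim V_0=\infty$, at least one of $V_{\leqslant -1}$, $V_{\geqslant 0}$ is infinite-dimensional, and the corresponding ray serves as the desired $\mathcal{S}$.

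The only actual work is the three-range verification in the splitting claim; everything afterwards is codimension counting and a pigeonhole. I do not anticipate a genuine obstacle, the one point requiring care being the bookkeeping in the middle range $p-r\leqslant n\leqslant p-1$, where the truncation and the prescribed zero block interact.
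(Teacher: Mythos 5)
Your proof is correct and follows essentially the same route as the paper: restrict to the finite-codimension subspace of solutions vanishing on a block of consecutive indices, split each such solution into its left and right truncations (both again solutions), and conclude by the direct-sum decomposition that one of the two rays carries an infinite-dimensional space. The only differences are cosmetic: you verify the truncation claim explicitly in three ranges (the paper just appeals to the dependence on the last $r$ terms) and you use a block of $r$ zeros where the paper uses $r+1$; both suffice.
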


\begin{proof}
Consider the subspace $V$ of the solution space of~\eqref{eq:main} consisting of the sequences $\{x(n)\}_{n \in \mathbb{Z}}$ satisfying $x(0) = x(1) = \ldots = x(r) = 0$.
This is a subspace of a finite codimension in the whole solution space (that is it is defined by finitely many constraints), so it must be also infinite-dimensional.
Consider a solution $\{x(n)\}$ of~\eqref{eq:main} belonging to $V$.
Since every next term in a solution of~\eqref{eq:main} depends only on the previous $r$, this solution can be represented as a sum of
\[
x^{+}(n) = \begin{cases}
 x(n), \text{ if } n > r,\\
 0, \text{ otherwise}
\end{cases}
\quad\text{and}\quad
x^{-}(n) = \begin{cases}
 x(n), \text{ if } n < 0,\\
 0, \text{ otherwise}
\end{cases}
\]
Thus, $V$ is a direct sum of subspaces of solutions of~\eqref{eq:main} with the support contained in $\mathbb{Z}_{> r}$ and $\mathbb{Z}_{< 0}$, respectively.
Therefore, at least one of these subspaces has infinite dimension.
\end{proof}

\begin{lemma}\label{lem:interval}
Assume that the dimension of the space $W$ of solutions of~\eqref{eq:main} with the support contained in $\mathbb{Z}_{\geqslant 0}$ is infinite.
There exists an index  $J > 0$ such that there exists a nonzero solution of~\eqref{eq:main} with the support contained in  $[1, \ldots, J]$.
\end{lemma}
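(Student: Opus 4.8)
The plan is to repair truncated solutions. As the lemma permits, set $V := \{x \in W : x(0) = 0\}$; this is precisely the space of solutions of~\eqref{eq:main} with support contained in $\mathbb{Z}_{\geqslant 1}$, it has codimension at most $1$ in $W$ and so is still infinite-dimensional, and producing a nonzero element of $V$ with support in $[1,\ldots,J]$ will finish the proof. For a sequence $z$ write $Lz$ for the sequence $n \mapsto \sum_{k=0}^{r} a_k(n)\, z(n+k)$, so that~\eqref{eq:main} reads $Lx = 0$. For $M \geqslant r$ and $x \in V$, let $x^{(M)}$ denote the truncation agreeing with $x$ on $[1,M]$ and vanishing elsewhere. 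The first step is the observation that $Lx^{(M)}$ is supported in the window $\{M-r+1,\ldots,M\}$: at any index $n \leqslant M-r$ the sum $\sum_k a_k(n)\,x^{(M)}(n+k)$ involves only values $x(n+k)$ with $n+k\leqslant M$, which the truncation leaves intact, so it equals $(Lx)(n)=0$; and at any index $n>M$ every term vanishes because $x^{(M)}$ does. Hence reading $Lx^{(M)}$ off on that window defines a linear map $D_M\colon V \to \mathbb{C}^{r}$.

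The second step identifies $\ker D_M$. If $D_M x = 0$ then $Lx^{(M)} = 0$, so $x^{(M)}$ is a solution of~\eqref{eq:main}, and therefore so is $x - x^{(M)} = L^{-1}(0)$... more precisely $L(x - x^{(M)}) = Lx - Lx^{(M)} = 0$; the two summands have disjoint supports, contained in $[1,M]$ and in $\mathbb{Z}_{>M}$ respectively. Writing $V_{\leqslant M}$ and $V_{>M}$ for the subspaces of $V$ consisting of solutions supported in $[1,M]$ and in $\mathbb{Z}_{>M}$, this gives $\ker D_M = V_{\leqslant M} \oplus V_{>M}$ (the reverse inclusion is immediate: $z \in V_{\leqslant M}$ has $z^{(M)}=z$, and $z\in V_{>M}$ has $z^{(M)}=0$). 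Since $D_M$ annihilates $V_{>M}$, it induces $\overline{D}_M\colon V/V_{>M} \to \mathbb{C}^{r}$ with kernel $\ker D_M / V_{>M} \cong V_{\leqslant M}$, whence $\dim V_{\leqslant M} \geqslant \dim(V/V_{>M}) - r$. So it is enough to find an $M$ with $\dim(V/V_{>M}) \geqslant r+1$.

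For the last step, note that $V_{>M} = \{x \in V : x(1) = \ldots = x(M) = 0\}$ defines a descending chain of finite-codimension subspaces of $V$ with $\bigcap_M V_{>M} = 0$. If $\dim(V/V_{>M}) \leqslant r$ held for every $M$, this non-decreasing integer sequence would be eventually constant, forcing the natural surjections $V/V_{>M+1} \twoheadrightarrow V/V_{>M}$ to be isomorphisms and hence $V_{>M}=V_{>M+1}$ for all large $M$; then $V_{>M_0} = \bigcap_{M\geqslant M_0} V_{>M} = 0$ and $\dim V = \dim(V/V_{>M_0}) \leqslant r$, contradicting $\dim V = \infty$. Therefore $\dim(V/V_{>M}) \to \infty$, some $M$ satisfies the bound, $V_{\leqslant M} \neq 0$, and any nonzero element of it is a solution of~\eqref{eq:main} supported in $[1,\ldots,J]$ with $J = M$. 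I expect the one genuinely load-bearing point to be the localization in the first step — replacing a solution by its truncation to $[1,M]$ costs an error living on only $r$ consecutive indices — since this is exactly what makes the ``defect'' finite-dimensional and hence correctable within the infinitely many available degrees of freedom; the splitting of $\ker D_M$ is then formal, and the concluding contradiction is the standard fact that a descending chain of finite-codimension subspaces with trivial intersection in an infinite-dimensional space cannot have bounded codimensions.
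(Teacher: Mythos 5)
Your proof is correct. It runs the same underlying dimension count as the paper, but through a somewhat different mechanism. The paper projects $W$ onto the coordinates $[1,\ldots,J]$, picks $J$ with $\dim W_J \geqslant r+2$, imposes the $r+1$ conditions $x(J-r)=\ldots=x(J)=0$ to obtain a solution vanishing on a window of length $r+1$, and then truncates at that window (cutting at $r+1$ consecutive zeroes preserves the equation). You instead truncate an arbitrary solution at $M$ and observe that the failure of the truncation to be a solution is confined to the $r$ positions $M-r+1,\ldots,M$; this packages the argument as a defect map $D_M\colon V\to\mathbb{C}^r$, a splitting $\ker D_M=V_{\leqslant M}\oplus V_{>M}$, and a rank--nullity count on $V/V_{>M}$. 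Each version buys something: the paper's is shorter and avoids quotient spaces; yours needs only $\dim(V/V_{>M})\geqslant r+1$ rather than $r+2$, it actually proves (via the descending-chain argument) the unboundedness statement that the paper merely asserts (``the dimension of $W_i$ can be arbitrarily large''), and by first passing to $V=\{x\in W\mid x(0)=0\}$ it cleanly guarantees that the final support lies in $[1,\ldots,J]$, a detail the paper glosses over when it declares its lifted solution to vanish for $n\leqslant 0$.
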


\begin{proof}
For every $i > 0$, we denote by $W_i$ the projection of the space $W$ onto the coordinates $[1, \ldots, i]$.
Since the dimension of $W$ is infinite, the dimension of $W_i$ can be arbitrarily large.
Consider $J$ such that $\dim W_J \geqslant r + 2$.
We will prove that such $J$ satisfies the conditions of the lemma.
Consider a subspace of $W_J$ consisting of sequences satisfying
\[
x(J - r) = x(J - r + 1) = \ldots = x(J) = 0
\]
It has a codimension at most  $r + 1$, so its dimension is at least one.
Therefore, there exists a nonzero element in $W$ of the form
\[
x_0(n) = \begin{cases}
    0,\text{ if } n \leqslant 0,\\
    x(n),\text{ if } 0 < n < J - r,\\
    0, \text{ if } J - r \leqslant n \leqslant J,\\
    x(n), \text{ if } J < n.
\end{cases}
\]
Since the possibility of extending a finite solution to an infinite one depends only on the last $r$ terms, $W$ also must contain the following solution
\[
x_1(n) = \begin{cases}
    0,\text{ if } n \leqslant 0,\\
    x(n),\text{ if } 0 < n < J - r,\\
    0, \text{ if } J - r \leqslant n.
\end{cases}
\]
Since $\operatorname{supp}(x_1) \subset [1, \ldots, J]$, the lemma is proved.
\end{proof}

\begin{proof}[Proof of Theorem~\ref{thm:main}]
Implication $\ref{infdim} \implies \ref{lacunary}$.
Again, let $W$ be the space of solutions of~\eqref{eq:main} with the suppport belonging to $\mathbb{Z}_{\geqslant 0}$.
Lemma~\ref{lem:ray} implies that, after shifting and reversing indices if necessary, we can further assume that the dimension of $W$ is infinite.
We will inductively construct a sequence of indices $J_1, J_2,\ldots$ and solutions  $A_1, A_2, \ldots$ of~\eqref{eq:main} as follows.
We apply Lemma~\ref{lem:interval} and obtain index $J_1$ and the corresponding solution $A_1$.
Now we assume that the index  $J_\ell$ and solution $A_\ell$ are already constructed.
Then we apply Lemma~\ref{lem:interval} to the solutions of~\eqref{eq:main} with the support contained in $\mathbb{Z}_{> J_\ell + \ell}$.
We thus will obtain index $J_{\ell + 1}$ and solution $A_{\ell + 1}$.
The constructed solutions $A_1, A_2, \ldots$ have final supports, and the difference $\min\operatorname{supp}(A_{i + 1}) - \max\operatorname{supp}(A_i)$ is at least  $i + 1$ by construction.
Consider an infinite sum  $A_1 + A_2 + A_3 + \ldots$.
It is well-defined since the supports of the summands do not intersect.
This sum is the desired lacunary solution.

Implication $\ref{lacunary} \implies \ref{infdim}$.
Let $\{x(n)\}_{n \in \mathbb{Z}}$ be a lacunary solution of~\eqref{eq:main}.
Consider finite intervals of zeroes in $\{x(n)\}$ of length greater than $r$, that is, two indices $i < j$ such that 
\[
x(i) = x(i + 1) = \ldots = x(i + r) = x(j) = x(j + 1) = \ldots = x(j + r) = 0 \quad \text{ and }\quad \exists i < k < j\colon x(k) \neq 0.
\]
Since the order of the equation is $r$, the sequence
\[
x_0 = \begin{cases}
    x(n), \text{ if } i + r < n < j,\\
    0,\text{ otherwise}
\end{cases}
\]
is a nonzero solution of~\eqref{eq:main} with finite support. Repeating this operation for other pairs of zero intervals of length greater than $r$ which do not overlap with each other (this is always possible because there are infinitely many of them due to lacunarity), we obtain nonzero solutions $\{x_1(n)\}, \{x_2(n)\}, \ldots$ of~\eqref{eq:main} with finite non-intersecting supports. 
They are linearly independent, so the dimension of the whole solution space is infinite as well.
\end{proof}

\begin{example}
Go back to Example \ref{ex1}. The equation  $L(y)=0$ has lacunary solutions, e.g., the sequence 
\[
l(n) = \begin{cases}
 1, \text{ if } n=2^m(r+1)+1 \text{ for some } m\in \zs ,\\
 0, \text{ otherwise. }
\end{cases}\]
\end{example}

\end{document}